\documentclass[12pt]{amsart}
\usepackage{amsfonts,amssymb,amsmath}
\usepackage[pagewise]{lineno}
\usepackage{url}
\usepackage{enumerate}
\usepackage{comment} 
\usepackage{color}
\usepackage[singlelinecheck=false]{caption}

\urlstyle{sf}
\newtheorem{theorem}{Theorem}[section]

\newtheorem{proposition}[theorem]{Proposition}
\newtheorem{corollary}[theorem]{Corollary}
\theoremstyle{definition}
\newtheorem{definition}[theorem]{Definition}
\newtheorem{remark}[theorem]{Remark}
\numberwithin{equation}{section}

\author{V. R. Bazao}   
\address{Faculdade de Ci\^encias Exatas e Tecnologias, UFGD, Dourados, MS, 79804-970  Brazil}

\author{C. R. de Oliveira} 
\address{Departamento de Matem\'atica, UFSCar, S\~ao Carlos, SP, 13560-970 Brazil}

\author{P. A. Diaz}   
\address{Departamento de Matem\'atica, UFSCar, S\~ao Carlos, SP, 13560-970 Brazil}

\keywords{unitary operators; multiplicative perturbation; singular spectrum; Birman-Krein Theorem}
\subjclass[2020]{Primary 47A55	 Secondary 47A10, 81Q10}

\begin{document}

\title{On the Birman-Krein Theorem}

\begin{abstract} It is shown that if $X$ is a unitary operator so that  a singular subspace of~$U$ is unitarily equivalent to a singular subspace of~$UX$ (or $XU$), for  each unitary operator~$U$, then $X$ is the identity operator. In other words, there is no nontrivial generalization of Birman-Krein Theorem that includes the preservation of a singular spectral subspace in this context. 
\end{abstract}

\maketitle

\section{Introduction}
We are interested in  preservations  of spectral types of unitary operators~$U$, on a  Hilbert space~$\mathcal H$,  under {\em multiplicative} (compositions, in fact) perturbations, that is, if~$X$ is another unitary operator, the perturbation has the form 
\begin{equation}\label{rightPert}
U \mapsto UX.
\end{equation} This is a {\em right} perturbation, and $U \mapsto XU$ is a {\em left} perturbation. Note that both $XU$ and $UX$ are again unitary operators. Some notation: we shall denote the identity operator by~${\mathbf 1}$ and $T_1\cong T_2$ means that the linear operators~$T_1$ and~$T_2$ are unitarily equivalent; $\mu\perp \nu$ means that the measures~$\mu$ and~$\nu$ are mutually singular; ${\mathbb T}=\left\lbrace c\in\mathbb{C}\mid |c|=1  \right\rbrace$ is the one-dimensional torus and the letter~$c$ will always denote an element of~${\mathbb T}$.

The main question to be investigated is what kind of spectral types of (any)~$U$ are preserved under such nontrivial perturbations~$X\ne {\mathbf 1}$, and this study was motivated by the corresponding question, addressed by Howland~\cite{howland1986}, in the context of additive perturbations of self-adjoint operators. Howland has concluded that there is no nontrivial generalization of Kato-Rosenblum Theorem, that is, only absolutely continuous spectra (with respect to Lebesgue measure) can be (always)  preserved under some nonzero perturbation (for instance, under trace class perturbations). Our main conclusion is similar and appear in Theorem~\ref{theorMain2}, that is, there is no generalization of Birman-Krein Theorem~\cite{birmankrein1962}, so that only the absolutely continuous spectral component of (any)~$U$ can be preserved under certain multiplicative unitary perturbation~$X\ne {\mathbf 1}$ (either right or left perturbations). 

Our arguments follow the general lines of~\cite{howland1986}, and do not go along with the Cayley transform since it does not take sums of self-adjoint operators to the multiplicative form~\eqref{eqMultAddit} of unitary ones, and since there were some nonobvious choices in the construction here, we have found the result interesting enough to justify publication. 

As an additional motivation, it is worth mentioning that multiplicative perturbations~\eqref{rightPert} of unitary operators naturally appear in quantum versions of time-periodic kicked systems~\cite{casatimolinari,combescure1990}. Let~$A$ be a self-adjoint operator describing certain quantum system  (a {\em free} one) that, at instants of time $j\tau, j\in\mathbb Z$, undergoes kicked perturbations by a self-adjoint operator~$B$; the formal operator describing such system is
\[
A \;+\;  B\,\sum_{j\in\mathbb Z} \delta(t-j\tau),
\]whose time evolution between two consecutive kicks (the Floquet operator) is given by~\cite{casatimolinari} 
\begin{equation}\label{floquetOp}
e^{-iB}\,e^{-iA\tau},
\end{equation} that is, multiplicative perturbations ($e^{-iB}$) of unitary operators ($e^{-iA\tau}$). A historically important model~\cite{casatimolinari} is the so-called {\em kicked rotator} with $A=-\mathrm d^2/\mathrm d\theta^2$ and $B=\kappa\cos\theta$, $0\le\theta<2\pi$, and a real parameter~$\kappa>0$; it is expected that, for ``most'' periods~$\tau>0$, the corresponding Floquet operator~\eqref{floquetOp} has pure point spectrum and a localized dynamics, specially for $\kappa>1$ (it is a region where classical and quantum physics would disagree), but currently, there is a mathematical proof~\cite{bourgain2002} of quantum localized dynamics  only for  small enough~$\kappa$.

A basic general discussion of multiplicative perturbations is given in Section~\ref{sectGeneral}, including a precise statement of preservation of spectral types and  of our main result, whose proof is concluded in Section~\ref{sectProof}.

\section{Multiplicative perturbations}\label{sectGeneral}
Sometimes it is convenient to write the perturbation in the form  $X=e^{iY}$, with $Y$ a bounded self-adjoint operator, so that
\begin{equation}\label{eqXemtermosY}
	X=e^{iY}=\sum_{j=0}^{\infty}\dfrac{(iY)^j}{j!}={\mathbf 1}+\sum_{j=1}^{\infty}\dfrac{(iY)^j}{j!}={\mathbf 1}+W,
\end{equation}   
with $W=\sum_{j=1}^{\infty}{(iY)^j}/{j!}$. In this way, we can write 
\begin{equation}\label{eqMultAddit}
UX=U({\mathbf 1}+W)=U+UW
\end{equation} and if $W$ is a trace class operator, then $UW$ is also trace class and, by Birman-Krein Theorem, the Lebesgue absolutely continuous subspaces of $UW$ and $U$ are unitarily equivalent for all~$U$ (the same for left perturbations $XU$).

An example is the case of multiplicative perturbations of rank one: given $\phi\in{\mathcal H}$ with  $||\phi||=1$, let $P_{\phi}$ be the projector operator onto the subspace $\mathrm{ Lin}\{ \phi\}$ generated by~$\phi$, i.e., 
\[
P_{\phi}\xi=\left\langle \phi, \xi \right\rangle \phi\,,\quad \forall \xi\in{\mathcal H},
\] which is self-adjoint and idempotent. The corresponding perturbing unitary operator, with ``intensity'' $\lambda\in\mathbf R$,  is 
 \begin{equation}\label{eqXlambda}
X_\lambda=X_{\lambda,\phi}:=e^{i\lambda P_{\phi}},
\end{equation}
 and by writing $e^{i\lambda P_{\varphi}}={\mathbf 1}+W$, it follows by~\eqref{eqXemtermosY} that 
\[
W\xi =(e^{i\lambda}-1)P_{\phi}\xi,
\] and the (right) perturbed operator has the action
\begin{equation}\label{eqU+rank1}
	U_{\lambda}:=UX_{\lambda}=U{\mathbf 1}+U(e^{i\lambda}-1)P_{\phi} ,  
\end{equation}
with $U_0=U$. To simplify statements, denote by $\mu_\psi^\lambda$ the spectral measure of the pair $(U_\lambda,\psi)$, $\psi\in{\mathcal H}$, and usually one supposes  that~$\phi$ is cyclic for $U$, that is, the closure
\[
\overline{\mathrm{ Lin}\left\lbrace U^j \phi\mid j\in \mathbb{Z}\right\rbrace }=\mathcal{H},
\] with $U^0={\mathbf 1}$. Note that if $\phi$ is cyclic for $U$, then $U^k\phi$  is also is cyclic for $U$, for all $k \in \mathbb{Z}$; furthermore, if~$\phi$ is cyclic for~$U$, then it is also cyclic for~$U_{\lambda}$ for all $\lambda\in\mathbb R$.

A particularly important result, that will be employed ahead, is the following version, for unitary operators, of the Aronszajn-Donoghue Theorem for self-adjoint ones (for proofs see Proposition 8.3 in~\cite{garcia2015} and Proposition~9.1.14 in~\cite{CMR}):

\begin{theorem}\label{thmUnitAronDono}
Let $\phi$ be cyclic for~$U$. If $\lambda_1- \lambda_2\ne 2\pi n$, for any $n\in\mathbb Z$, then the singular parts of $U_{\lambda_1}$ and~$U_{\lambda_2}$ are mutually singular, i.e., for all $0\ne\psi\in{\mathcal H}$, the singular parts of the spectral measures $\mu_\psi^{{\lambda_1}}$ and $\mu_\psi^{{\lambda_2}}$ are mutually singular.
\end{theorem}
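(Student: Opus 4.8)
The plan is to reduce to the single cyclic vector $\phi$ and then to follow the Aronszajn--Donoghue strategy, tracking how the Carath\'eodory transform of the spectral measure is moved by the rank-one perturbation. First I would reduce the claim ``for all $0\ne\psi$'' to the case $\psi=\phi$. Since $\phi$ is cyclic for $U$ it is cyclic for each $U_\lambda$, so the spectral theorem furnishes a unitary map $\mathcal H\to L^2(\mathbb T,\mu_\phi^\lambda)$ intertwining $U_\lambda$ with multiplication by $c$ and carrying $\phi$ to the constant $1$. A general $\psi$ then corresponds to some $g\in L^2(\mu_\phi^\lambda)$ with $\mu_\psi^\lambda=|g|^2\,\mu_\phi^\lambda\ll\mu_\phi^\lambda$, whence the Lebesgue-singular part $(\mu_\psi^\lambda)_{\mathrm s}=|g|^2(\mu_\phi^\lambda)_{\mathrm s}$ is carried by every carrier set of $(\mu_\phi^\lambda)_{\mathrm s}$. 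Consequently it suffices to produce, for each admissible $\lambda$, a Borel set $A_\lambda$ carrying $(\mu_\phi^\lambda)_{\mathrm s}$ and to show $A_{\lambda_1}\cap A_{\lambda_2}=\varnothing$.

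Next I would derive the transformation law for the Carath\'eodory function. Writing $U_\lambda=U+(e^{i\lambda}-1)\,UP_\phi=U+(e^{i\lambda}-1)\,|U\phi\rangle\langle\phi|$ as a rank-one perturbation and applying the Sherman--Morrison resolvent identity, together with $(U-z)^{-1}U=\mathbf 1+z(U-z)^{-1}$, a direct computation of $G_\lambda(z)=\langle\phi,(U_\lambda-z)^{-1}\phi\rangle$ and of
\[
F_\lambda(z)=\int_{\mathbb T}\frac{c+z}{c-z}\,d\mu_\phi^\lambda(c)=1+2z\,G_\lambda(z),\qquad |z|<1,
\]
yields the M\"obius relation
\[
F_\lambda(z)=\frac{p\,F(z)+q}{q\,F(z)+p},\qquad p=e^{i\lambda}+1,\quad q=e^{i\lambda}-1,
\]
where $F=F_0$. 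As $\mu_\phi^\lambda$ is a probability measure, $F_\lambda$ maps $\{|z|<1\}$ into the right half-plane, and the displayed map is the corresponding automorphism of that half-plane.

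Finally I would read off the carrier from boundary values. By the boundary theory of Carath\'eodory functions underlying the cited proofs, for Lebesgue-a.e.\ $c\in\mathbb T$ the radial limit $F(c)=\lim_{r\to1^-}F(rc)$ exists in $\mathbb C\cup\{\infty\}$, and $(\mu_\phi^\lambda)_{\mathrm s}$ is carried by $\{c:\mathrm{Re}\,F_\lambda(rc)\to+\infty\}=\{c:F_\lambda(rc)\to\infty\}$. From the M\"obius formula, and since $pF+q$ and $qF+p$ never vanish at the same value of $F$ (that would force $p^2=q^2$, impossible here), one has $F_\lambda\to\infty$ exactly where $F(c)=-p/q=i\cot(\lambda/2)$; thus $(\mu_\phi^\lambda)_{\mathrm s}$ is carried by $A_\lambda=\{c\in\mathbb T:F(c)=i\cot(\lambda/2)\}$. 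Because $\cot(\lambda_1/2)=\cot(\lambda_2/2)$ holds precisely when $\lambda_1-\lambda_2\in2\pi\mathbb Z$, the hypothesis gives $A_{\lambda_1}\cap A_{\lambda_2}=\varnothing$ (a boundary value cannot equal two distinct numbers), which is the asserted mutual singularity. I expect the main obstacle to be exactly this last step: rigorously justifying that the singular part is concentrated on the level set $A_\lambda$ of the boundary values of the fixed function $F$, which rests on the a.e.\ existence of radial limits and on a de~la~Vall\'ee~Poussin/Poltoratski-type concentration theorem for Herglotz measures; by comparison the algebra leading to the M\"obius law is routine.
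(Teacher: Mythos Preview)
The paper does not give its own proof of this theorem; it simply cites Proposition~8.3 of Garcia--Ross and Proposition~9.1.14 of Cima--Matheson--Ross. Your sketch is exactly the standard Aronszajn--Donoghue argument adapted to the disk that those references carry out: reduce to the cyclic vector, derive the M\"obius law $F_\lambda=(pF+q)/(qF+p)$ with $p=e^{i\lambda}+1$, $q=e^{i\lambda}-1$, and then read off from boundary-value theory that $(\mu_\phi^\lambda)_{\mathrm s}$ lives on the level set $\{c:\lim_{r\to1^-}F(rc)=i\cot(\lambda/2)\}$. Your algebra checks (in particular $-p/q=i\cot(\lambda/2)$ and $p^2-q^2=4e^{i\lambda}\ne0$), and the reduction from general $\psi$ to $\phi$ via cyclicity is the right opening move. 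The one genuine technical input, which you correctly isolate, is the Herglotz/de~la~Vall\'ee~Poussin fact that the singular part of a probability measure on~$\mathbb T$ is carried by the set where its Carath\'eodory function has infinite radial limit; once that is quoted, the rest is routine, so your proposal is correct and matches the literature the paper defers to.
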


In order to discuss more general spectral subspaces, for $\psi\in{\mathcal H}$, denote by $\mu_\psi^U$ the spectral measure of the pair $(U,\psi)$ and, given a (nonzero) finite Borel measure~$\mu$ on ${\mathbb T}$, let
\[
{\mathcal H}_\mu(U):=\left\lbrace \psi \in \mathcal{H}\mid \mu_\psi^{U}\ll \mu \right\rbrace,
\]
which is a closed subspace of the Hilbert space, whose orthogonal complement is
\[
 \left\{ \psi\in{\mathcal H} \mid \mu_\psi^U\perp \mu  \right \}.
\]
Finally, denote by $[U]_\mu$ the restriction of $U$ to~${\mathcal H}_\mu(U)$, i.e., $[U]_\mu:= U|_{{\mathcal H}_\mu(U)}$.

\begin{definition}
A unitary operator $X$  preserves~$\mu$ on the right  if $[UX]_{\mu}\cong [U]_{\mu}$, for all unitary operators~$U$.  And~$X$  preserves~$\mu$ on the left  if $[XU]_{\mu}\cong [U]_{\mu}$, for all unitary operators~$U$.
\end{definition}

\begin{remark}\label{remMany}
	\begin{enumerate}[(a)]		

\item The simple case $X=c{\mathbf 1}$ translates the spectra in~$\mathbb T$ and so does not preserve measures if~$c\ne1$. \label{remItem1}

\item As already mentioned, by Birman-Krein Theorem, if $X={\mathbf 1}+W$ with trace class~$W$, then $X$ preserves the Lebesgue measure~$\ell$ on~${\mathbb T}$.

\item If $\nu \ll \mu$, then
\begin{equation}
	[[U]_{\mu}]_{\nu}=[U]_{\nu}.
\end{equation}
\item For real $t$, define the translated (in~${\mathbb T}$) measure
\begin{equation}
	\mu_t(\cdot):=\mu\left(e^{it}\cdot \right). 
\end{equation}
Then \begin{equation}
	[U]_{\mu_t}=e^{it}[U]_{\mu}.
\end{equation}

\end{enumerate}
\end{remark}

It is enough to discuss right or left preservation of a measure, as Proposition~\ref{propPresLR} ensures. After this proposition we will just say ``$X$ preserves~$\mu$'' (as already employed above).

\begin{proposition}\label{propPresLR}
$X$  preserves $\mu$ on the right if and only if $X$  preserves~$\mu$ on the left.
\end{proposition}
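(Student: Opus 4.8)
The plan is to reduce the two-sided statement to a single elementary observation: for any fixed unitary operators $U$ and $X$, the products $UX$ and $XU$ are unitarily equivalent. Indeed, conjugating by the unitary $X$ gives the algebraic identity
\[
X(UX)X^{-1}=XUXX^{-1}=XU,
\]
so that $UX\cong XU$, the intertwining unitary being $X$ itself.

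Next I would check that the assignment $U\mapsto[U]_\mu$ is an invariant of unitary equivalence, i.e. that $T_1\cong T_2$ implies $[T_1]_\mu\cong[T_2]_\mu$. Writing $T_2=ZT_1Z^{-1}$ for a unitary $Z$, the functional calculus gives $f(T_2)=Zf(T_1)Z^{-1}$ for every bounded Borel $f$ on ${\mathbb T}$, so that for each $\psi\in{\mathcal H}$,
\[
\int_{\mathbb T} f\,d\mu^{T_2}_{Z\psi}=\left\langle Z\psi, f(T_2)Z\psi\right\rangle=\left\langle \psi, f(T_1)\psi\right\rangle=\int_{\mathbb T} f\,d\mu^{T_1}_{\psi},
\]
whence $\mu^{T_2}_{Z\psi}=\mu^{T_1}_{\psi}$. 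Consequently $\psi\in{\mathcal H}_\mu(T_1)$ if and only if $Z\psi\in{\mathcal H}_\mu(T_2)$, so $Z$ restricts to a unitary map ${\mathcal H}_\mu(T_1)\to{\mathcal H}_\mu(T_2)$ intertwining $[T_1]_\mu$ with $[T_2]_\mu$; hence $[T_1]_\mu\cong[T_2]_\mu$. Applying this with $T_1=UX$ and $T_2=XU$ yields $[UX]_\mu\cong[XU]_\mu$ for every unitary~$U$.

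Finally I would combine the two facts. If $X$ preserves $\mu$ on the right, then for each unitary $U$ we have $[XU]_\mu\cong[UX]_\mu\cong[U]_\mu$, so $X$ preserves $\mu$ on the left; the reverse implication follows by reading the same chain backwards, exchanging the roles of the left and right perturbations. This closes the equivalence.

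As for the main obstacle, there is essentially no hard analytic step: the content is concentrated in the verification that $[\,\cdot\,]_\mu$ is a unitary invariant, and specifically that the intertwining unitary $Z$ carries ${\mathcal H}_\mu(T_1)$ exactly onto ${\mathcal H}_\mu(T_2)$. The only point deserving a little care is the handling of the quantifier ``for all~$U$'': since the equivalence $UX\cong XU$ holds separately for each fixed $U$, the passage between right and left preservation can be carried out $U$ by $U$, so no uniformity issue arises.
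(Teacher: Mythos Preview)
Your proof is correct. Both your argument and the paper's rest on the same underlying fact---that $[\,\cdot\,]_\mu$ is invariant under unitary conjugation---but you apply it more economically. The paper proceeds in two steps: first it applies the right-preservation hypothesis to the operator $X^*U$ and uses $X^*UX\cong U$ to deduce that $X^*$ preserves~$\mu$ on the left, and then it uses $X^*X=\mathbf{1}$ to pass from left preservation by~$X^*$ to left preservation by~$X$. You instead observe directly that $X(UX)X^{-1}=XU$, so $UX\cong XU$ for every~$U$, and conclude in one stroke. Your route is shorter and avoids the detour through~$X^*$; the paper's route, on the other hand, yields Corollary~\ref{corolXXstar} (that $X$ preserves~$\mu$ iff $X^*$ does) as an explicit intermediate byproduct, whereas in your approach that corollary would require a separate (equally trivial) remark.
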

\begin{proof}
Let $X^*$ denotes the adjoint of~$X$. If $X$  preserves~$\mu$ on the right, then pick $X^*U$ (which is also a unitary operator); by hypothesis $[X^*UX]_{\mu}\cong [X^*U]_{\mu}$, and since  $X$ is unitary, $[X^*UX]_{\mu} \cong [U]_{\mu}$ and one obtains that $[X^*U]\cong [U]_{\mu}$, and so~$X^*$ preserves~$\mu$ on the left. Now one has $[U]_\mu=[X^*XU]_\mu\cong [XU]_\mu$, and so~$X$ preserves~$\mu$ on the left. 

Similarly, one concludes the reciprocal.
\end{proof}

\begin{corollary}\label{corolXXstar}
$X$ preserves~$\mu$  if and only if $X^*$ preserves~$\mu$.
\end{corollary}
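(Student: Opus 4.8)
The plan is to obtain the statement as a short formal consequence of Proposition~\ref{propPresLR}, essentially by reading off an intermediate implication already contained in its proof. First I would record that, thanks to Proposition~\ref{propPresLR}, the unqualified phrase ``preserves $\mu$'' is well defined for any unitary operator (right and left preservation coincide), in particular for both $X$ and $X^*$; so it suffices to prove the single implication ``$X$ preserves $\mu$ $\implica$ $X^*$ preserves $\mu$'', the reverse implication then following by symmetry from $(X^*)^*=X$ (replace $X$ by $X^*$ throughout).

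For that single implication, the key step is exactly the observation used inside the proof of Proposition~\ref{propPresLR}: if $X$ preserves $\mu$ on the right, then applying the hypothesis $[UX]_\mu\cong[U]_\mu$ to the unitary operator $X^*U$ in place of $U$ gives $[X^*UX]_\mu\cong[X^*U]_\mu$. Since $X$ is unitary and $X(X^*UX)X^*=U$, the operators $X^*UX$ and $U$ are unitarily equivalent, hence $[X^*UX]_\mu\cong[U]_\mu$, and therefore $[X^*U]_\mu\cong[U]_\mu$ for every unitary $U$; that is, $X^*$ preserves $\mu$ on the left. By Proposition~\ref{propPresLR} applied to $X^*$, this is the same as saying that $X^*$ preserves $\mu$, which closes the forward direction.

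The one point that deserves a line of justification is that unitary equivalence of the full operators $X^*UX\cong U$ descends to unitary equivalence of their $\mu$-restrictions $[X^*UX]_\mu\cong[U]_\mu$. This holds because the defining condition $\mu_\psi^U\ll\mu$ of the subspace ${\mathcal H}_\mu(U)$ is invariant under conjugation by a unitary: if $V$ intertwines two unitaries, then $\mu_{V\psi}$ and $\mu_\psi$ lie in the same measure class, so $V$ carries one $\mu$-subspace onto the other and intertwines the corresponding restrictions. Beyond this routine verification there is no genuine obstacle; the corollary is a bookkeeping argument stacked on top of the proposition, and the only care needed is to keep the left/right labels straight and to invoke Proposition~\ref{propPresLR} for $X^*$ rather than for $X$.
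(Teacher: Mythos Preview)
Your proposal is correct and follows exactly the route the paper intends: the corollary is stated without a separate proof because the implication ``$X$ preserves $\mu$ on the right $\implica$ $X^*$ preserves $\mu$ on the left'' is the first half of the proof of Proposition~\ref{propPresLR}, and you have simply read this off and closed the loop with $(X^*)^*=X$. Your added justification that unitary conjugation carries ${\mathcal H}_\mu(U)$ to ${\mathcal H}_\mu(X^*UX)$ is the right way to make explicit the paper's phrase ``since $X$ is unitary, $[X^*UX]_\mu\cong[U]_\mu$''.
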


Theorems~\ref{theorMain2}  states the main result of this note.

\begin{theorem}\label{theorMain2}
If the unitary operator $X\ne {\mathbf 1}$ preserves~$\mu$, then $\mu\ll\ell$.
\end{theorem}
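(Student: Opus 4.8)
The plan is to prove the contrapositive: assuming $\mu\not\ll\ell$, I will show that any unitary $X$ preserving $\mu$ must equal $\mathbf{1}$. First I would reduce to the purely singular case. Writing $\mu=\mu_{ac}+\mu_s$ with $\mu_{ac}\ll\ell$ and $\mu_s\perp\ell$, the subspace $\mathcal{H}_\mu(U)$ splits as $\mathcal{H}_{\mu_{ac}}(U)\oplus\mathcal{H}_{\mu_s}(U)$ into reducing subspaces, and this splitting is spectrally canonical, hence respected by any unitary equivalence; therefore $[UX]_\mu\cong[U]_\mu$ forces $[UX]_{\mu_s}\cong[U]_{\mu_s}$, i.e.\ $X$ preserves $\mu_s$. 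Since $\mu\not\ll\ell$ gives $\mu_s\neq0$, it suffices to treat a nonzero $\mu$ with $\mu\perp\ell$.

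The key simplification is that unitarily equivalent operators share the same maximal spectral type, so $[UX]_\mu\cong[U]_\mu$ can hold only if the two restrictions have equivalent maximal spectral measures. Hence, to violate preservation it is enough to produce a \emph{single} unitary $U$ with $[U]_\mu\neq0$ whose perturbation $UX$ has maximal spectral type mutually singular to $\mu$, in particular $\mathcal{H}_\mu(UX)=\{0\}$. I would first carry this out in the rank-one model, where Theorem~\ref{thmUnitAronDono} enters. Realize $U$ as multiplication by the identity function on $L^2(\mathbb{T},\mu)$, with cyclic vector $\phi$ and $\mu_\phi^U=\mu$, so that $U$ has simple, purely singular spectrum and $[U]_\mu=U\neq0$. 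For $X=X_\lambda=e^{i\lambda P_\phi}$ with $\lambda\notin 2\pi\mathbb{Z}$, Theorem~\ref{thmUnitAronDono} shows the singular part of $\mu_\phi^\lambda$ is mutually singular to $\mu$; since a trace-class perturbation cannot create absolutely continuous spectrum out of the purely singular $U$, the whole measure $\mu_\phi^\lambda$ is singular to $\mu$, so $\mathcal{H}_\mu(UX_\lambda)=\{0\}$ and $[UX_\lambda]_\mu=0\neq U$. Thus no nontrivial $X_\lambda$ preserves $\mu$.

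The substantive step is to remove the rank-one hypothesis. Given arbitrary $X\neq\mathbf{1}$, Cauchy--Schwarz produces a unit vector $\phi$ with $X\phi\neq\phi$; I would build $U$ so that $\phi$ is cyclic with $\mu_\phi^U=\mu$ (again $U$ is multiplication by the identity on $L^2(\mathbb{T},\mu)$ and $[U]_\mu=U$), and then show that this alone forces the maximal spectral type of $UX$ to be singular to $\mu$, contradicting $[UX]_\mu\cong U$. The difficulty is that $UX-U=U(X-\mathbf{1})$ has the rank of $X-\mathbf{1}$, which is generally infinite, so Theorem~\ref{thmUnitAronDono} does not apply verbatim. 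I expect this reduction to be the main obstacle, and the natural routes are: (i) to isolate a finite-rank or cyclic piece of $X$ along $\phi$ and invoke a finite-rank analogue of the Aronszajn--Donoghue mechanism while controlling the remainder; or (ii) to exploit the identities $UX_\lambda X=X_{\lambda,U\phi}(UX)$ and $UX_\lambda X=(UX)X_{\lambda,X^*\phi}$, which recast the right rank-one family of $U$ as a rank-one family of $UX$ relocated by $X$ to the vector $X^*\phi$ (or $U\phi$), so that Theorem~\ref{thmUnitAronDono} can be applied to $UX$ itself. In either route the nonobvious choice is that of the cyclic vector $\phi$ adapted to $X$, ensuring that the perturbation genuinely displaces the singular spectrum off the measure class of $\mu$.

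Finally I would dispatch the degenerate case in which $X\phi$ is parallel to $\phi$ for every unit vector, i.e.\ $X=c\mathbf{1}$ with $c\neq1$: then $UX=cU$ merely rotates the spectrum, so taking $U$ of maximal type $\mu$ makes $[cU]_\mu$ live on the rotated class $\bar c\,\mu$. A nonzero singular $\mu$ is not equivalent to a nontrivial rotation of itself outside measure-class symmetries, and even at such a symmetry the operator $cU$ has spectrum rotated by $c\neq1$ and so is not unitarily equivalent to $U$; thus preservation fails, as anticipated in Remark~\ref{remMany}. Combining the scalar and non-scalar cases yields $X=\mathbf{1}$, which is the contrapositive of the statement.
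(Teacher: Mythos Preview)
Your outline correctly reduces to a singular measure and correctly handles the rank-one case via Theorem~\ref{thmUnitAronDono}, but the crucial step---passing from an arbitrary $X$ to a rank-one perturbation---is left as an acknowledged ``main obstacle,'' and neither of your proposed routes closes it. Route~(ii) founders on cyclicity: your identities $UX_\lambda X=(UX)X_{\lambda,X^*\phi}=X_{\lambda,U\phi}(UX)$ are correct, but to invoke Theorem~\ref{thmUnitAronDono} for the family $(UX)X_{\lambda,X^*\phi}$ you need $X^*\phi$ cyclic for $UX$; a short computation shows this is equivalent to $\phi$ being cyclic for $XU$, which you have no control over since $X$ is arbitrary. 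Route~(i) is only a hope, not an argument. Your treatment of the scalar case also slips: a singular $\mu$ can perfectly well be invariant under a nontrivial rotation (e.g.\ the uniform measure on the $n$th roots of unity), and for such a rotation $cU$ can have the \emph{same} spectrum as $U$; the right way to kill $X=c\mathbf 1$ is simply to take $U=\mathbf 1$ and observe $c\mathbf 1\not\cong\mathbf 1$.

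The paper's proof supplies exactly the missing idea, and it is quite different from what you attempt. Rather than comparing $U$ and $UX$ directly, it exploits the \emph{closure properties} of the class of $\mu$-preserving unitaries (Proposition~\ref{proposition 2.3} and Corollary~\ref{corolXXstar}): if $X$ preserves $\mu$ then so do $X^*$, every conjugate $R^*XR$, and every product of such. Choosing $R=X_{\pi,\phi}$ a rank-one reflection, one checks by hand that the ``commutator'' $Z:=X^*R^*XR$ equals $\mathbf 1+W$ with $W$ of rank exactly~$2$ (provided $\phi$ is chosen with $\{\phi,X^*\phi\}$ linearly independent, possible since $X\neq c\mathbf 1$). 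Restricting $Z$ to the orthogonal complement of one eigenvector of $W$ (Proposition~\ref{proposition 2.3}(\ref{propBasicd})) yields a nontrivial rank-one $Z_2=\mathbf 1+(e^{i\lambda}-1)P_{e_2}$ that still preserves $\mu$, and \emph{now} Theorem~\ref{thmUnitAronDono} applies. The point you are missing is that one should manufacture a new preserving operator of small rank from $X$, rather than analyse $UX$ for the original $X$.
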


The completion of the proof of this theorem is the subject of Section~\ref{sectProof}; in the following we present some basic and useful properties. 

\begin{proposition}\label{proposition 2.3}
	Suppose that~$X$ preserves~$\mu$. Then:
	\begin{enumerate}[(a)]
		\item If $\nu \ll \mu$, then $X$ preserves $\nu$. \label{propBasica}
		\item $X$ preserves $\mu_t$, for all $t \in \mathbb{R}$. \label{propBasicb}
		\item If $X \cong Y$, then $Y$ preserves $\mu$. \label{propBasicc}
		\item \label{propBasicd} If $P_E$ is an orthogonal reducing projection  for $X$ (projection onto the closed subspace $E\subset\mathcal H$), then $P_EX$ preserves $\mu$ on $P_E(\mathcal{H}) =E$. Recall that $E^\perp$ will also be reducing for~$X$.
		\item If $Y$ also preserves $\mu$, then $XY$ preserves $\mu$. \label{propBasice}
	\end{enumerate}
\end{proposition}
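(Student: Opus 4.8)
The plan is to rest everything on one elementary fact, which I will call $(\star)$: if $T_1\cong T_2$ through a unitary $V$ (so $VT_1=T_2V$), then spectral measures transform by $\mu_{V\psi}^{T_2}=\mu_{\psi}^{T_1}$, so $V$ carries $\mathcal H_\nu(T_1)$ onto $\mathcal H_\nu(T_2)$ and restricts to a unitary equivalence $[T_1]_\nu\cong[T_2]_\nu$ for every measure $\nu$. Items (a), (c), (e) then become bookkeeping with $(\star)$ and the remarks. For (e) I write $U(XY)=(UX)Y$ and apply $\mu$-preservation of $Y$ to the unitary $UX$, then $\mu$-preservation of $X$ to $U$, chaining $[UXY]_\mu\cong[UX]_\mu\cong[U]_\mu$. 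For (c), with $Y=ZXZ^*$ I set $U'=Z^*UZ$; then $UY=Z(U'X)Z^*$, so $(\star)$ gives $[UY]_\mu\cong[U'X]_\mu$, $\mu$-preservation of $X$ gives $[U'X]_\mu\cong[U']_\mu$, and $(\star)$ again (since $U'\cong U$) gives $[U']_\mu\cong[U]_\mu$. For (a), item~(c) of Remark~\ref{remMany} lets me write $[U]_\nu=[[U]_\mu]_\nu$ and $[UX]_\nu=[[UX]_\mu]_\nu$; since $[UX]_\mu\cong[U]_\mu$, applying $(\star)$ to this pair with the measure $\nu$ gives $[[UX]_\mu]_\nu\cong[[U]_\mu]_\nu$, i.e. $[UX]_\nu\cong[U]_\nu$.

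For (b) I would pass the rotation onto the operator. A short spectral computation (equivalent to item~(d) of Remark~\ref{remMany}) shows $\mathcal H_\mu(e^{it}V)=\mathcal H_{\mu_t}(V)$, hence $[e^{it}V]_\mu=e^{it}[V]_{\mu_t}$ for every unitary $V$. Applying $\mu$-preservation of $X$ to the unitary $e^{it}U$, and using that $X$ commutes with the scalar $e^{it}$, I get $[e^{it}(UX)]_\mu\cong[e^{it}U]_\mu$, that is $e^{it}[UX]_{\mu_t}\cong e^{it}[U]_{\mu_t}$. The global scalar $e^{it}$ does not affect the intertwining unitary, so $[UX]_{\mu_t}\cong[U]_{\mu_t}$, which is preservation of $\mu_t$.

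Item (d) is where the one genuine idea is needed, and I expect it to be the main obstacle. Writing $\mathcal H=E\oplus E^\perp$, since $E$ reduces $X$ we have $X=X_E\oplus X_{E^\perp}$ with $X_E:=P_EX|_E$ unitary on $E$. Given an arbitrary unitary $U_E$ on $E$, the plan is to test $\mu$-preservation on $\mathcal H$ against the extension $U=U_E\oplus c\,{\mathbf 1}_{E^\perp}$, where $c\in\mathbb{T}$ is chosen with $\mu(\{c\})=0$ (possible, as $\mu$ has at most countably many atoms). Because $E,E^\perp$ reduce both $U$ and $X$, the relevant subspaces split as orthogonal sums over $E$ and $E^\perp$: for a block-diagonal unitary one has $\mu_{\psi}^{U_E\oplus S}=\mu_{\psi_E}^{U_E}+\mu_{\psi_{E^\perp}}^{S}$, whence $\mathcal H_\mu(U_E\oplus S)=\mathcal H_\mu(U_E)\oplus\mathcal H_\mu(S)$ and $[U_E\oplus S]_\mu=[U_E]_\mu\oplus[S]_\mu$. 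Thus $[U]_\mu=[U_E]_\mu\oplus[c\,{\mathbf 1}_{E^\perp}]_\mu$ and $[UX]_\mu=[U_EX_E]_\mu\oplus[c\,X_{E^\perp}]_\mu$.

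The crux is that both $E^\perp$-summands vanish. Since $\mu(\{c\})=0$, every spectral measure of $c\,{\mathbf 1}_{E^\perp}$ is a multiple of $\delta_c\perp\mu$, so $[c\,{\mathbf 1}_{E^\perp}]_\mu=0$ and $[U]_\mu=[U_E]_\mu$. For the other summand I first test $\mu$-preservation on $\mathcal H$ against $U=c\,{\mathbf 1}_{\mathcal H}$: as $[c\,{\mathbf 1}_{\mathcal H}]_\mu=0$, preservation forces $[cX]_\mu\cong[c\,{\mathbf 1}_{\mathcal H}]_\mu=0$, i.e. $\mathcal H_\mu(cX)=\{0\}$; restricting this to the reducing subspace $E^\perp$ (apply the block splitting to $cX=cX_E\oplus cX_{E^\perp}$) yields $[c\,X_{E^\perp}]_\mu=0$. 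Hence $[UX]_\mu=[U_EX_E]_\mu$, and $\mu$-preservation on $\mathcal H$ now reads $[U_EX_E]_\mu\cong[U_E]_\mu$, which is exactly preservation of $\mu$ by $P_EX$ on $E$. The step I expect to require the most care is precisely this elimination of the $E^\perp$-block: a naive extension leaves a residual summand $[X_{E^\perp}]_\mu$ that in general cannot be cancelled against $[U_E]_\mu$, and the device of testing against the scalar operators $c\,{\mathbf 1}$, with $c$ off the atoms of $\mu$, is what renders both residual summands singular with respect to $\mu$, hence zero.
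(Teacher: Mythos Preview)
Your arguments for (a), (b), (c), (e) are essentially the paper's own, only with the intertwining principle~$(\star)$ made explicit; the paper uses it tacitly each time it passes a $\cong$ through a restriction $[\cdot]_\nu$ or a conjugation.

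For (d) you take a genuinely different (and more careful) route. The paper extends an arbitrary unitary $\widetilde U$ on $E$ by an arbitrary unitary $\widetilde V$ on $E^\perp$, writes both $[U]_\mu$ and $[UX]_\mu$ in block-diagonal form, and then simply ``equates first components'' to obtain $[\widetilde U X|_E]_\mu\cong[\widetilde U]_\mu$. That step is terse: a unitary equivalence $A_1\oplus B_1\cong A_2\oplus B_2$ does not, without further input, force $A_1\cong A_2$, since the intertwining unitary need not respect the block decomposition. Your device---choosing $\widetilde V=c\,\mathbf 1_{E^\perp}$ with $c$ off the atoms of $\mu$, and first testing preservation against $c\,\mathbf 1_{\mathcal H}$ to conclude $\mathcal H_\mu(cX)=\{0\}$ and hence $[cX_{E^\perp}]_\mu=0$---kills both $E^\perp$-summands outright, so the equivalence $[U]_\mu\cong[UX]_\mu$ literally reduces to $[\widetilde U]_\mu\cong[\widetilde U X|_E]_\mu$ with nothing left to cancel. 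This is the cleanest way to make the block reduction rigorous, and it is exactly the point you flagged as the main obstacle.
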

\begin{proof}
	\begin{enumerate}[(a)]
		\item $ [UX]_{\nu} \cong  [  [UX]_{\mu}   ]_{\nu} \cong  [U_\mu  ]_{\nu}\cong[U]_{\nu}$.  
		\item $ [UX]_{\mu_t}=e^{it} [UX]_{\mu}= [e^{it}UX]_{\mu}$ and since $e^{it}U$ is a unitary operator, we have $ [e^{it}UX]_{\mu} \cong  [e^{it}U]_\mu=U_{\mu_t} $; hence  $ [UX]_{\mu_t}=[U]_{\mu_t}$.  
		\item If $X\cong Y$, that is, $Y=VXV^*$ for some unitary operator~$V$, then 
		\begin{eqnarray*}
		 [UY]_{\mu}= [UVXV^* ]_{\mu} &=& [ V (V^*UVXV^*V) V^*  ]_{\mu}  \\ &\cong&  [V^*UVXV^*V]_{\mu}= [V^*UVX]_{\mu}\,,
		 \end{eqnarray*} and since $V^*UV$ is unitary and $X$ preserves $\mu$, one obtains
		\[
		 [V^*UVX]_{\mu}\cong  [V^*UV]_{\mu}\cong [U]_{\mu},
		\] that is, $  [UY]_{\mu}\cong [U]_{\mu}$.
		\item Write $\mathcal{H}= E \oplus E^{\perp}$; then $X=  X |_{E} \oplus  X |_{E^{\perp}}$, which can be written as 
		\[
		X= \begin{pmatrix}
				 X |_{E} & 0\\
				0 &  X |_{E^{\perp}}  
			\end{pmatrix}.
			\] Now, if $\widetilde{U} $ and  $\widetilde{V}$ are unitary operators acting on~$E$ and~$E^{\perp}$, respectively, then
			 \[
			U=\begin{pmatrix}
				\widetilde{U} & 0\\
				0 & \widetilde{V}
			\end{pmatrix}
			\]
			 is unitary on $\mathcal{H}$. Thereby,
			\begin{equation*}
				[U]_{\mu} \cong  [UX]_{\mu}= \left[ \begin{pmatrix}
					\widetilde{U} & 0\\
					0 & \widetilde{V}
				\end{pmatrix} \begin{pmatrix}
					 X  |_{E} & 0\\
					0 &  X  |_{E^{\perp}} 
				\end{pmatrix} \right ]_{\mu} =  \begin{pmatrix}
					 [\widetilde{U} X |_{E} ]_{\mu} & 0 \\
					0 &  [\widetilde{V} X |_{E^{\perp}}   ]_{\mu}  
				\end{pmatrix} ,
			\end{equation*} and since $[U]_{\mu} = \begin{pmatrix}
				[\widetilde{U}]_{\mu} & 0 \\
				0 & [\widetilde{V}]_{\mu}
			\end{pmatrix}$, by equating the first components, one gets $ [\widetilde{U} X |_{E}   ]_{\mu} \cong \widetilde{U}_{\mu}$, and so $X |_{E}$ preserves $\mu$ on  $E=P_E ( \mathcal{H} )$.  	
		\item Indeed, if $U$ is a unitary operator, then $UX$ is also unitary and
	\begin{equation*}
		 [UXY]_{\mu} \cong  [UX]_{\mu} \cong [U]_{\mu}. 
	\end{equation*}
\end{enumerate}
\end{proof}

\section{Proof of Theorem~\ref{theorMain2}}\label{sectProof}

	 It is supposed that~$X$ preserves~$\mu$; the goal is to show that $\mu\ll\ell$. By Remark~\ref{remMany}(\ref{remItem1}), we may suppose that $X\ne c{\mathbf 1}$. By item~(\ref{propBasicc}) of  Proposition~\ref{proposition 2.3},  $R^*XR$ also preserves $\mu$, for any unitary operator~$R$. Pick the operator $R=X_{\lambda}$ from~\eqref{eqXlambda}, with $\lambda=\pi$ and some normalized vector~$\phi$ (that will be selected ahead), that is,
	\begin{equation*}
		R\xi=X_{\pi}\xi=\xi+(e^{i\pi}-1)\left\langle \phi,\xi \right\rangle \phi=\xi-2\left\langle \phi,\xi \right\rangle \phi,
	\end{equation*}
	for all $\xi \in \mathcal{H}$.
	
	By Proposition~\ref{proposition 2.3}, items~(\ref{propBasicc}) and~(\ref{propBasice}), and Corollary~\ref{corolXXstar}, the operators $R^*XR$ and  
\[
Z:=X^*R^*XR
\] preserve~$\mu$ as well. Explicitly, one has 
	\begin{align*}
		R^*XR\xi = & R^*X \left(\xi-2\left\langle\phi,\xi \right\rangle \phi    \right)\\
		= & R^* \left( X\xi-2\left\langle\phi,\xi \right\rangle X\phi     \right)\\ 
		= & R^*(X\xi)-R^*\left(-2\left\langle\phi,\xi \right\rangle X\phi  \right)\\
		= & R^*(X\xi)-2\left\langle\phi,\xi \right\rangle R^*\left(X\phi \right)\\
		= & X\xi-2\left\langle \phi,X\xi \right\rangle \phi  -  2\left\langle\phi,\xi \right\rangle \left(X\phi-2\left\langle \phi,X\phi \right\rangle \phi  \right)\\
		= & X \xi + \big(4\left\langle\phi,\xi \right\rangle \left\langle \phi,X\phi\right\rangle-2\left\langle \phi,X\xi \right\rangle     \big)\phi -2\left\langle \phi,\xi \right\rangle X\phi,   
	\end{align*}
and if $\psi=X^*\phi$,
\[
Z\xi= \xi + \left[4\left\langle\phi,\xi \right\rangle \left\langle \phi,X\phi \right\rangle-2\left\langle \phi,X\xi \right\rangle    \right]\psi-2\left\langle\phi,\xi \right\rangle \phi\,=:(\mathbf 1 + W)\xi,
\] that is,
\[
W\xi =  \left[4\left\langle\phi,\xi \right\rangle \left\langle \phi,X\phi \right\rangle-2\left\langle \phi,X\xi \right\rangle    \right]\psi-2\left\langle\phi,\xi \right\rangle \phi\,.
\]	
	
	Since $X\ne c{\mathbf 1}$, one may choose $\phi$ so that the set $\{\phi, \psi\}$ is linearly independent and so the operator~$W$  has rank precisely~$2$ (note that if $X=c\mathbf 1$, then $W=0$, as expected). Let~$e_1$ and~$ e_2$ be  two normalized and independent eigenvectors of~$W$ and note that the corresponding eigenvalues do not vanish, otherwise~$W$ would have rank smaller than~2. Observe that~$e_1$ and~$ e_2$   are also eigenvectors of the unitary operator~$Z$, hence (one may suppose that) $e_1\perp e_2$. If~$E$ is the orthogonal complement of~$e_1$, which reduces~$Z$, it follows, by Proposition~\ref{proposition 2.3}(\ref{propBasicd}), that the operator $Z_2:= Z|_{E}$ also preserves~$\mu$ on~$E$, and since we now have a nonzero rank one perturbation, $Z_2$ has the form~\eqref{eqXlambda}, that is, 
\[
Z_2 = \mathbf 1 + (e^{i\lambda}-1)P_{ e_2},
\] for some $\lambda\ne 2\pi k, k\in\mathbb Z$.
	
	Finally, pick a unitary operator~$\dot U$ on~$E$ with~$ e_2$ one of its cyclic vectors; hence
\[
\dot UZ_2 = \dot U + \dot U(e^{i\lambda}-1)P_{ e_2}	
\] has the form~\eqref{eqU+rank1}. Since $Z_2$ preserves~$\mu$,
\[
[\dot UZ_2]_\mu\cong [\dot U]_\mu
\] and, by Theorem~\ref{thmUnitAronDono},  if~$\mu$ has a (nonzero) singular component, $Z_2$ could not preserve~$\mu$, and one concludes that~$\mu$ is absolutely continuous with respect to Lebesgue measure. 
	
\subsection*{Acknowledgments.} CRdO thanks the partial support by CNPq (under contract number 303689/2021-8), and PAD thanks the support by CAPES (Brazilian agencies).

\end{document}